\documentclass[11pt]{article}
\usepackage[letterpaper,margin=1.75in]{geometry}
\usepackage[T1]{fontenc}
\usepackage[utf8]{inputenc}
\usepackage{lmodern}
\usepackage{amsmath,amssymb,amsthm,mathtools}
\newcommand{\llbracket}{[\![}
\newcommand{\rrbracket}{]\!]}
\usepackage{microtype}
\usepackage{hyperref}
\usepackage[numbers]{natbib}

\newtheorem{theorem}{Theorem}[section]
\newtheorem{lemma}[theorem]{Lemma}
\newtheorem{proposition}[theorem]{Proposition}
\newtheorem{corollary}[theorem]{Corollary}
\theoremstyle{definition}
\newtheorem{definition}{Definition}[section]
\theoremstyle{remark}
\newtheorem*{remark}{Remark}

\title{Thinking outside the box is useless \\[1ex] \large NFA = FNFA}
\author{Maxence Ponsardin \and Ville Salo}
\date{\today}

\begin{document}
\maketitle
\begin{abstract}
We study picture-walking automata, namely finite-state automata that accept higher-dimensional parallelotope-/tensor-shaped pictures and are allowed to move in all directions depending on their current state and the currently read symbol. It is a long-standing open problem whether the nondeterministic such automata become stronger if automata are allowed to exit the picture. In this paper, we resolve the problem in full generality: NFA = FNFA, for pictures of any dimension.
\end{abstract}

\section{Introduction}

In one-dimensional formal language theory, the class of regular languages is highly robust, and a large family of finite-state models accepts precisely this class of languages. In particular deterministic, nondeterministic, and alternating finite-state automata all define precisely this class. This also remains true even if one allows automata to revisit previous cells, by allowing the automata to choose the direction of movement.

In higher-dimensional formal language theory, one-dimensional words are replaced by matrices or higher-dimensional tensors (or geometrically $d$-dimensional parallelotopes) of symbols. In dimension two and higher, the situation is the opposite, and essentially any two models define a different set of languages. We concentrate here on the models that allow movement in all directions, called \textbf{picture-walking automata}. Concretely, the automaton starts in some fixed spot (like a corner) of the picture in any initial state, follows its transition rule to move around the picture and change states depending on the symbol under the head, and finally may or may not enter an accepting state.

Here, it is in particular known that deterministic finite-state automata (DFA) are strictly weaker than nondeterministic finite-state automata (NFA), and this remains true even if one restricts to unary languages \cite{10.1007} or square-shaped languages \cite{Kari2f011}

An interesting notion due to Rosenfeld \cite{rosenfeld2014picture} from 1979 is to allow such automata to exit the picture frame, where they see only empty space. These are called \emph{frameless} automata. For the DFA model, it is rather trivial that this capability is not useful. For the FNFA (frameless NFA) model, intuition suggests that the analogous result remains true, but this is not as easy to show.

Rosenfeld solved this problem for one-dimensional words (where the word can only be exited left or right), and left the problem open in two dimensions. Kari and Moore proved a technical result in \cite{10.1007}, giving a strong result for two-dimensional automata that are not allowed to \textbf{enter} the picture. This was used by the second-named author \cite{salo2011classes} to obtain NFA = FNFA in dimension two. This method does not work in three or higher dimensions, and this question was left open in \cite{Kari2f011} In fact, it does not even resolve the NFA = FNFA problem for one-dimensional words thought of as three-dimensional pictures.

The contribution of the present paper is to solve the general case (for all parallelotope-shaped pictures in all dimensions).

\begin{theorem}
NFA = FNFA holds in all dimensions
\end{theorem}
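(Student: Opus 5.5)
The inclusion NFA $\subseteq$ FNFA is trivial: a frameless automaton can simulate an ordinary one by never stepping onto blank cells. The work is the converse. Given an FNFA $A$ with state set $Q$ on $d$-dimensional pictures, we want an NFA $B$ that never leaves the picture but accepts the same language. The plan is to replace each excursion of $A$ outside the picture by a nondeterministic guess, by $B$, of the ``summary'' of that excursion. We then show that these summaries can be verified, or at least correctly approximated, using only finite information available at the boundary.

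\textbf{Step 1 (outside world is periodic).} Outside the picture $A$ sees only blanks, so its behaviour there depends only on its state and its displacement relative to the picture. We decompose the exterior according to which coordinates are below $0$, inside the range, or above the side length; this gives $3^d-1$ ``orthants/slabs''. In a region where $k$ coordinates are out of range, the automaton is a blank-reading walker in $\mathbb{Z}^k$ times an interval (the in-range coordinates) times... In the deepest region it is a finite-state walker on $\mathbb{Z}^k$ reading a constant symbol. Its reachable relation (state, position) $\to$ (state, position) is therefore semilinear: a finite union of sets $v+\mathbb{N}P$. We use the standard pumping argument that a unary NFA's reachable displacements are eventually periodic, generalised to $\mathbb{Z}^k$ by Parikh-type reasoning on the finitely many simple cycles of the transition graph.

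\textbf{Step 2 (bounded excursions suffice).} The key lemma we aim for is that there is a constant $N$ depending only on $A$ with the following property. Whenever $A$ accepts, there is an accepting run in which, for every excursion that leaves at boundary point $x$ in state $p$ and re-enters at $y$ in state $q$, we can replace the excursion by one staying within distance $N$ of the picture. We may also need to replace it by a concatenation of such excursions that ``slide'' along the boundary face. The idea is to use the semilinear description to shorten any far-away detour: cycles that move away from the picture must be balanced by cycles moving back, and we can cancel them in pairs, keeping the net displacement. The delicate part is that the in-range coordinates also have bounded extent (the picture's size), so excursions near edges and corners interact with several faces at once. We handle this by induction on the number $k$ of out-of-range coordinates, applying the shortening on each face with the picture's side lengths treated as parameters. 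The shortening must stay uniform in those parameters, which works because only finitely many residues modulo the cycle periods matter.

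\textbf{Step 3 (simulation inside).} Once all excursions are confined to an $N$-thick shell, $B$ simulates $A$ by folding the shell into the picture. A position at depth $\le N$ outside is encoded as the nearest boundary cell together with an offset vector in $\{0,\dots,N\}^k$ in the finite state. Walking along the boundary inside the picture emulates walking in the shell. For small pictures, where some side is less than $2N$, the shell from opposite faces overlaps; the offset bookkeeping still works since offsets are measured from explicit faces. Degenerate pictures of bounded size are finitely many per side-length class, and we handle them by hardcoding.

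\textbf{Main obstacle.} The obstacle is Step 2 in dimension $\ge 3$. An excursion can travel far along a face while the picture's width in other directions is unbounded, and the shortened path must still exit and re-enter at exactly the right boundary points. Here the Kari–Moore method fails. I would first try to prove a uniform ``Presburger normal form'': the exit–entry relation over the exterior of a box with side lengths $n_1,\dots,n_d$ is definable by a formula whose periodic parts have bounded periods and offsets independent of $n_i$. Verifying such relations with bounded-offset walks along the boundary is then a finite-state task.
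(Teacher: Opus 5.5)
\textbf{Gap in Step 2.} Your Step 2 is a genuine gap. You leave it unproven, as you note yourself. Worse, as stated it is false, so the plan cannot be completed this way.

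Take $d=2$. Let $A$ climb to the top-left corner and step out through the top face. It then repeats a cycle (right, up) some $j$ times, nondeterministically switches to a cycle (right, down), and on first reading a non-blank symbol accepts iff that symbol is $1$. Each loop iteration moves one cell right and one cell vertically, and re-entry needs as many down moves as up moves. So the only excursions of $A$ re-enter at column $2+2j$ of the top row after reaching distance $j+1$ from the picture. On pictures whose top row contains a single $1$, in an even column $c$, every accepting run therefore goes to distance $c/2$, which is unbounded.

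No concatenation of shallow excursions can ``slide'' along the face instead, since $A$ has none. Your pairwise cycle cancellation fails for the same reason: deleting one copy of each cycle moves the landing point by two cells. For $d\ge 2$ the tangential displacement of an excursion is tied to its depth, so $A$'s own runs cannot be confined to a bounded shell. Likewise, Step 1 only gives semilinearity for unconstrained blank walks. The constraint of avoiding the box while passing between regions is the whole difficulty, and Step 1 does not address it.

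\textbf{What the paper does instead.} The paper never shortens $A$'s excursions. It has the NFA reproduce the exit--entry (landing) relation directly.

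For a single half-space, depth acts as a counter, so a pushdown automaton plus Parikh's theorem gives a semilinear set of landing positions. The NFA walks to a point of this set along the face. Steinitz-type reordering of the periods keeps that walk within bounded distance of the segment from exit to landing point, so it fits in the folded shell.

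For $d\ge 3$, excursions can wander through several half-spaces around edges and corners. The missing idea here is the Navigation Lemma. A minimal picture-avoiding path at distance $\ge |Q|$ has word $u_1v_1^{l_1}\cdots u_{M_1}v_{M_1}^{l_{M_1}}u_{M_1+1}$ with a bounded number of blocks; this is shown via lexicographic minimality and a word-commutation argument. Hence, by Lemma 2, such a path makes a bounded number of good half-space switches. The landing sets are then built by induction on the number of switches, with one pushdown-and-Parikh step per switch.

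Your hoped-for uniform Presburger normal form is essentially the output of that induction. Without this zig-zag structure you have no way to prove it.

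\textbf{Step 3.} Your Step 3 matches Proposition~\ref{prop:SimulateClose}. Hard-coding small pictures is unnecessary there, since offsets measured from clamped positions already handle overlapping shells. In any case, those pictures are not finitely many.
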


The crucial new method is to analyze the shape of paths used to reach one position from another, while staying outside the picture frame. Specifically, we prove the \textbf{Navigation Lemma} (Lemma 1), which states that if the ``minimal'' path from a position-state pair $(q, p)$ to another such pair $(q', p')$ stays sufficiently far outside the picture frame, then this path consists of a bounded number of ``straight'' geometric segments. We call such a path a \emph{zig-zag}.

In the case where there is no picture at all, the Navigation Lemma essentially simplifies to the known result that a regular language's Parikh image is realized by a bounded sublanguage \cite{10.1145}, although we apply a linear projection that only keeps track of movement of the automaton, for example a left move is considered the inverse of a right move. Up to this difference, the Navigation Lemma can be seen as a weak analog of the result of \cite{10.1145} with a ``parallelotope-shaped hole'' of forbidden Parikh images.

Once we have proved the Navigation Lemma, the main result is proved roughly as follows. By analyzing the minimal path between two position-state pairs, it is possible to divide this path into subpaths such that some of them stay close enough to the picture to be simulated while the others are far enough to have a zig-zag shape. 

As a zig-zag only moves between the half-spaces outside the picture only a bounded number of times, we can simulate them inside the picture using essentially methods from the two-dimensional theory, namely it is known that when an FNFA exits a picture from one side, its ``configuration of possible landings'' back to the hyperplane extending the side is semi-linear. The two-dimensional version appears in \cite{10.1007}, and is called the \textbf{Landing Lemma} in \cite{salo2011classes}

\section{Definitions}

\begin{definition}
For $N, M \in \mathbb{Z}$, we define by $\llbracket N, M\rrbracket$ the sets of integers whose values are between $N$ and $M$. I.e.\ $\llbracket N, M\rrbracket = \{x \in \mathbb{Z} \;|\; N \le x \le M\}$.
\end{definition}

\begin{definition}
Let $\Sigma$ be an alphabet, a picture $b$ over $\Sigma$ is the $d$-dimensional analog of a word. I.e.\ a picture is a $d$-dimensional tensor such that its coefficients are in $\Sigma$. We denote $\Sigma_*^*$ the set of pictures over $\Sigma$. For a vector $p \in \mathbb{Z}^d$, we denote by $b_{p}$ the content of the cell\textbf{ in position }$p$\textbf{ of }$b$. If $p$ is not a cell of $b$ then $b_{p}$ is set to be $\#$, a special character that is not in the alphabet.
We write $\operatorname{dom}(b)$ for the set of positions of $(\mathrm{non\text{-}}\#)$ cells of $b$, called the domain of $b$ and $\operatorname{edom}(b)$ for the cells of $\operatorname{dom}(b)$ and their immediate cardinal and diagonal neighbors.
\end{definition}

\begin{definition}
Let $b \in \Sigma_*^*$ be a picture, the dimension of $b$ is the unique tuple $(d_{1}, \dots, d_{d}) \in (\mathbb{N}^*)^d$ such that $\operatorname{dom}(b) = \{(a_{1}, \dots, a_{d}) \in \mathbb{Z}^d | \forall i \in \llbracket 1, d\rrbracket, 1 \le a_{i} \le d_{i}\}$. The distance $d$ between two vectors in $\mathbb{Z}^d$ is given by default in the infinity norm $\ell^\infty$ of their distance i.e.\ if we have $v_{1} = (a_{1}, \dots, a_{d}) \in \mathbb{Z}^d$ and $v_{2} = (b_{1}, \dots, b_{d}) \in \mathbb{Z}^d$, the distance between $v_{1}$ and $v_{2}$ is $d(v_{1}, v_{2}) = \max(|a_{i} -b_{i}|)$. We say that a vector $v = (a_{1}, \dots, a_{d})$ is at a distance $L$ from the picture iff $L = \min(\{d(v, u) | u \in \operatorname{dom}(b)\})$. Finally, a vector $v = (a_{1}, \dots, a_{d}) \in \mathbb{Z}^d$ is at a distance at least $L$ from the picture iff the distance between $v$ and the picture is at least $L$ i.e.\ $\exists i \in \llbracket 1, d\rrbracket, a_{i} < 1 - L \lor a_{i} > d_{i} + L$.
\end{definition}

\begin{definition}
A picture walking Nondeterministic Finite state Automata (NFA) is a 6-tuple $(\Sigma, Q, M, I, F, \delta)$ such that:
\begin{itemize}
\item $\Sigma$ is a finite alphabet. Usually $\Sigma = \{0, 1\}$
\item $Q$ is a finite set of states
\item $M \in \mathbb{Z}^d$ is a finite set of moves. Usually $M$ is the sets of vectors of $\mathbb{Z}^d$ that have $\ell^1$-norm at most 1. We write the zero movement vector as $0 \in M$.
\item $\delta: Q \times (\Sigma \cup \{\#\}) \to 2^{Q \times M}$ is a transition function
\item $I \subset Q$ is a set of initial states.
\item $F \subset Q$ is a set of final states
\end{itemize}
The Frameless Nondeterministic Finite state Automata (FNFA) are defined by the same data.
\end{definition}

The automata are defined by the same data, but their acceptance conditions are slightly different:

\begin{definition}
Let $b \in \Sigma_*^*$ be a picture and $A = (\Sigma, Q, M, I, F, \delta)$ be a picture walking \textbf{NFA}. We say that $A$ accepts the picture $b$ iff there is a finite sequence of states $q_{0}, \dots, q_{N}$ and a finite sequence of positions $p_0, \dots, p_N$ such that:
\begin{itemize}
\item $q_{0} \in I$ and $p_{0} = (1, \dots, 1)$
\item $q_{N} \in F$
\item \textbf{$\forall i \in \llbracket 0, N\rrbracket, p_{i} \in \operatorname{edom}(b)$} \hfill (*)
\item $\forall i \in \llbracket 0, N-1\rrbracket, (q_{i+1}, p_{i+1} - p_{i}) \in \delta(q_{i}, b_{p_i})$
\end{itemize}
We call $L(A)$ the set of pictures that are accepted by $A$. For \textbf{FNFA}, we use the exact same definition, except the third item (marked with (*)) is removed.
\end{definition}

\begin{remark}
The only difference between a NFA and a FNFA is that during the acceptance of a picture, the FNFA is allowed to go outside $\operatorname{edom}(b)$. Alternatively, we could simply define NFA as those FNFA which never \textbf{attempt} to exit the immediate neighborhood of the picture (if started normally from a corner, in an initial state), and then use the same acceptance condition for both. This condition is actually undecidable, but in any case the class of languages defined by NFA would remain the same with this alternative definition.
\end{remark}

\begin{definition}
Let $A = (\Sigma, Q, M, I, F, \delta)$ be a picture walking FNFA on $\mathbb{Z}^d$. A \emph{configuration} is a pair $(q, p)$ where $q \in Q$ and $p \in \mathbb{Z}^d$.
\end{definition}

\begin{definition}
Let $A = (\Sigma, Q, M, I, F, \delta)$ be a picture walking FNFA on $\mathbb{Z}^d$ and $b \in \Sigma_*^*$ be a picture. A path between two configurations $(q_{s}, p_{s})$ and $(q_{t}, p_{t})$ is a finite sequence of configurations $(q_{0}, p_{0}), \dots, (q_{N}, p_{N}) \in Q \times \mathbb{Z}^d$ such that:
\begin{itemize}
\item $(q_{s}, p_{s}) = (q_{0}, p_{0})$
\item $(q_{t}, p_{t}) = (q_{N}, p_{N})$
\item $\forall i \in \llbracket 0, N-1\rrbracket, (q_{i+1}, p_{i+1}-p_{i}) \in \delta(q_{i}, b_{p_i})$ where $b_{p_i}$ is the letter read at position $p_{i}$ on the picture $b$.
\end{itemize}
A path can be represented by the starting position $p_{0}$ and the following word:
$q_{s} = q_{0} \overset{p_{1} - p_{0}}\longrightarrow q_{1} \overset{p_{2} - p_{1}}\longrightarrow \dots \overset{p_{N} - p_{N-1}}\longrightarrow q_{N}  = q_{t}$
\end{definition}

\begin{definition}
Let $A = (\Sigma, Q, M, I, F, \delta)$ be a picture walking FNFA on $\mathbb{Z}^d$ and $b \in \Sigma_*^*$ be a picture. A word of the automaton $A$ is a sequence of states and direction such that there exists a path between two configurations that has exactly this sequence of states and directions.
\end{definition}

\begin{definition}
Let $A = (\Sigma, Q, M, I, F, \delta)$ be a picture walking FNFA on $\mathbb{Z}^d$ and $b \in \Sigma_*^*$ be a picture. A path between two configurations $(q_{s}, p_{s})$ and $(q_{t}, p_{t})$ that avoids the picture is a path $(q_{0}, p_{0}), \dots, (q_{N}, p_{N})$ such that $\forall i \in \llbracket 1, N\rrbracket, p_{i} \notin \operatorname{dom}(b)$.
\end{definition}

\begin{definition}
Let $A = (\Sigma, Q, M, I, F, \delta)$ be a picture walking FNFA on $\mathbb{Z}^d$ and $b \in \Sigma_*^*$ be a picture. The minimal path between two configurations $(q_{s}, p_{s})$ and $(q_{t}, p_{t})$ is the finite sequence $(q_{0}, p_{0}), \dots, (q_{N}, p_{N}) \in Q \times \mathbb{Z}^d$ of configurations such that:
\begin{itemize}
\item $(q_{0}, p_{0}), \dots, (q_{N}, p_{N})$ is a path between $(q_{s}, p_{s})$ and $(q_{t}, p_{t})$
\item For all other paths $(q'_{0}, p'_{0}), \dots, (q'_{N'}, p'_{N'})$, we either have $N < N'$ or we have $N = N' \land (q_{0}, p_{0}, \dots, q_{N}, p_{N}) \le_{\mathrm{lex}} (q'_{0}, p'_{0}, \dots, q'_{N}, p'_{N})$
\end{itemize}
\end{definition}

\begin{remark}
Similarly, we can define the minimal path between two configurations that avoids the picture.
\end{remark}

\section{Results}

\begin{lemma}
\textbf{\emph{(The Navigation Lemma)}}. Let $A = (\Sigma, Q, M, I, F, \delta)$ be a picture walking FNFA on $\mathbb{Z}^d$ and $b \in \Sigma_*^*$ be a picture. There exists $M_{1}$ such that for every pair of configurations, if the minimal path that avoids the picture stays at a distance at least $|Q|$ from the picture, then its word is of the shape $u_{1} v_{1}^{l_{1}} u_{2} \dots u_{M_1} v_{M_{1}}^{l_{M_1}} u_{M_1+1}$ with $u_{1}, \dots, u_{M_{1}+1}, v_{1}, \dots, v_{M_1}$ words of size at most $|Q|$ and $l_{1}, \dots, l_{M_1} \in \mathbb{N}$.
\end{lemma}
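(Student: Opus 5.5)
The plan is to exploit that, at distance at least $|Q|$ from the picture, the automaton reads only $\#$. Along such a path it therefore behaves as an ordinary one-dimensional finite automaton $A_\#$ over the alphabet of moves $M$, with state graph $G$ on $Q$ and edges $q\xrightarrow{m}q'$ whenever $(q',m)\in\delta(q,\#)$. The word of the minimal avoiding path is then a walk in $G$. The geometric constraint is only that its partial displacements, added to the starting position, never enter the picture; by hypothesis they in fact stay at distance at least $|Q|$ from it.

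\textbf{Decomposition into simple cycles.} First I would use the standard decomposition of a walk in a graph with $|Q|$ vertices. Repeatedly extracting the first repeated state gives a factorization of the word into a simple path of length at most $|Q|$, into which simple cycles $c_1,\dots,c_k$ (each of length at most $|Q|$) are inserted in a nested fashion. Minimality of the path, in length and then lexicographically, gives two facts. No configuration repeats. Moreover, any cycle with zero total displacement could be cut out, shortening the path, and the remaining positions are unchanged outside a window of length at most $|Q|$. That window lies within distance $|Q|$ of positions already on the path, and because the path stays at distance at least $|Q|$ from the picture, the cut path still avoids the picture. Hence every extracted cycle has nonzero displacement. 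The set $\mathcal C$ of simple cycles of $G$ is finite, and its size bounds the number of distinct cycle types.

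\textbf{Reordering within a half-space.} The complement of the $|Q|$-neighbourhood of the box is a union of $2d$ open half-spaces $H_{i,\pm}=\{x_i<1-|Q|\}$ or $\{x_i>d_i+|Q|\}$. Consider a maximal segment of the path that stays in one half-space $H$ with inward normal $n$. Inside it, I would permute the extracted cycles, respecting their insertion points in the underlying simple path, so that all copies of the same cycle become consecutive. I would order them by decreasing $\langle n,\cdot\rangle$-component of their displacement, so that those moving away from the picture come first. With this ordering, the $n$-coordinate along the rearranged segment is, up to an error of $|Q|$ coming from partial cycles, a concave function of time. Its minimum is therefore attained near the endpoints, which lie in $H$, so the rearranged segment still stays in $H$ (this is where the slack $|Q|$ is used). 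The rearranged segment has the same length, endpoints and states. By lexicographic minimality, the minimal path must already agree with some canonical such ordering. Hence each segment has the form $u_1v_1^{l_1}\cdots$ with at most $|\mathcal C|+1$ blocks, where the $v_j$ are simple cycles and the $u_j$ are pieces of simple paths of length at most $|Q|$.

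\textbf{Bounding the number of segments.} It remains to bound the number of switches between half-spaces. I expect this to be the main obstacle. My first attempt is an exchange argument on the sequence of visited half-spaces. Suppose the path leaves $H_{i,+}$ and later returns to it. Then the intermediate portion lies in the union of half-spaces other than $H_{i,+}$ together with $H_{i,+}$ itself. I would show that cycles of that portion can be commuted so as to merge the two visits to $H_{i,+}$, using the same concavity ordering with respect to a suitable normal for each coordinate. This should force the sequence of visited half-spaces to have length bounded by a function of $d$ and $|\mathcal C|$. Summing the block counts over segments then yields the constant $M_1$, which depends only on $A$ and $d$ and not on $b$ or on the configurations.
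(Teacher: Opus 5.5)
There is a genuine gap, in two places. First, in the reordering step you show, modulo the nesting issues, that \emph{some} rearrangement of the cycles is a valid competitor, namely the one sorted by normal displacement. But lexicographic minimality only tells you that the minimal path is $\le_{\mathrm{lex}}$ that competitor. It does not tell you that the minimal path itself groups the copies of each cycle together. The sentence ``the minimal path must already agree with some canonical such ordering'' is exactly what needs proof, and a global sort gives no handle on it. There are also two smaller problems with that step. With nested insertions, a cycle can only be moved to a place where its base state occurs, so a global sort is not generally available. And the statement asks for the flat shape $u_1v_1^{l_1}u_2\cdots$, not a nested one.

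Second, the bound on the number of half-space switches is left as an expectation. Your proposed fix is to commute cycles through the intermediate portion so as to merge two visits to $H_{i,+}$. That moves cycles over unbounded distances, which shifts long stretches of the path by unbounded amounts and can push it through the box. Nothing in your sketch controls this.

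The missing idea is a \emph{local} exchange that uses lexicographic minimality directly and makes the half-space analysis unnecessary.
\begin{enumerate}
\item Take a flat decomposition $w=u_1v_1^{l_1}\cdots u_Kv_K^{l_K}$, with $u_i$ simple paths and $v_i$ simple cycles. Choose it minimal in $K$, then lexicographically least. Minimality of the decomposition gives $l_i\ge 1$ for $i<K$.
\item Suppose $v_i=v_j$ with $i<j<K$. Move a single copy of $v_i$ from block $i$ to block $j$, or from block $j$ to block $i$. Each move shifts the intermediate positions by at most $|v_i|\le|Q|$.
\item By the distance-$|Q|$ hypothesis, both new paths still avoid the picture; this, not concavity, is where the slack is used. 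They also have the same length and the same endpoints as $w$.
\item Write $z=v_i^{l_i-1}u_{i+1}\cdots u_jv_j^{l_j-1}$. Comparing $w$ with the two competitors gives $v_iz\le_{\mathrm{lex}}zv_i\le_{\mathrm{lex}}v_iz$, so $v_i$ and $z$ commute.
\item A simple cycle is primitive, so $z=v_i^q$ for some $q$. Then blocks $i,\dots,j$ merge into $u_iv_i^{q+2}$, contradicting minimality of $K$.
\end{enumerate}
Hence the $v_i$ are pairwise distinct, and $K-1$ is bounded by the number of simple cycles.

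In the paper, the bound on half-space switches (Lemma 2) is derived \emph{from} the Navigation Lemma. Your plan inverts this dependency, and the step you flag as the main obstacle is not needed at all.
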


To show this result, we need to introduce new definitions. In the following, recall that we can represent paths with words, and we mostly identify the two.

\begin{definition}
Let $A = (\Sigma, Q, M, I, F, \delta)$ be a picture walking FNFA on $\mathbb{Z}^d$ and $b \in \Sigma_*^*$ be a picture. Let W be the set of words of the automaton $A$. We define these two sets:
$U = \{q_0 \xrightarrow{a_1} q_1 \cdots \xrightarrow{a_m} q_m \in P \mid \forall i,j \in \llbracket 0,m\rrbracket,\ i\ne j \Rightarrow q_i\ne q_j\}$ and $V = \{q_0 \xrightarrow{a_1} q_1 \cdots \xrightarrow{a_m} q_m \in P \mid q_0=q_m,\ \forall i,j \in \llbracket 0,m-1\rrbracket,\ i\ne j \Rightarrow q_i\ne q_j\}$. 
$U$ is the set of words that do not contain any repetition of states and $V$ is the set of words that start and end in the same state but there is no other repetition. A \emph{decomposition} of a word $w=q_0 \xrightarrow{a_1} q_1 \cdots \xrightarrow{a_N}q_N$ is a tuple $(u_1,v_1,l_1,\dots,u_K,v_K,l_K)$ such that $w=u_1v_1^{l_1}u_2\cdots u_Kv_K^{l_K}$ with $u_1,\dots,u_K\in U$, $v_1,\dots,v_K\in V$, and $l_1,\dots,l_K\in\mathbb N$. The \emph{minimal decomposition} of $w$ is the lexicographically least such decomposition among those with minimal $K$.
\end{definition}

\begin{remark}
The decomposition of a word is not necessarily unique. However, every path has a decomposition. For a word $w=q_0 \xrightarrow{a_1} q_1 \cdots \xrightarrow{a_N}q_N$, one elementary decomposition is given below.
\[
\forall i \in \llbracket 1, N\rrbracket, (u_{i}, v_{i}, l_{i}) \quad = \begin{cases}
(q_{i-1} \overset{a_{i}}\longrightarrow q_{i}, q_{i}, 0) &\mathrm{ if } q_{i-1} \ne q_{i}\\
(q_{i-1}, q_{i-1} \overset{a_{i}}\longrightarrow q_{i}, 1) & \mathrm{otherwise}\end{cases} \]
\end{remark}

\begin{proof}[Proof of Lemma 1]
Let $A = (\Sigma, Q, M, I, F, \delta)$ be a picture walking FNFA on $\mathbb{Z}^d$ and $b \in \Sigma_*^*$ be a picture. Let $(q_{s}, p_{s})$, $(q_{t}, p_{t})$ be two configurations and $(q_{0}, p_{0}), \dots, (q_{N}, p_{N})$ the minimal path between these two configurations that avoids the picture and we denote its word $w$. Note $k = |Q|$ and suppose that $w$ stays at a distance at least $k$ from the picture. We consider the minimal decomposition $w = (u_{1}, v_{1}, l_{1}, \dots, u_{K}, v_{K}, l_{K})$.

Firstly, we can claim that: $\forall i \in \llbracket 1, K-1\rrbracket, l_{i} \ne 0$. Suppose there exists $i \in \llbracket 1, K-1\rrbracket$ such that $l_{i} = 0$, we have that
\[ (u_{1}, v_{1}, l_{1}, \dots, u_{i}, v_{i}, 0, u_{i+1}, v_{i+1}, l_{i+1}, \dots, u_{K}, v_{K},l_{K}) \]
is the minimal decomposition of $w$.

Now we have two cases:
\begin{itemize}
\item Case 1: $u_{i} u_{i+1} \in U$. Then $(u_{1}, v_{1}, l_{1}, \dots, u_{i} u_{i+1}, v_{i+1}, l_{i+1}, \dots, u_{K}, v_{K}, l_{K})$  is a valid decomposition of $w$. This is a contradiction, because it has less elements than $(u_{1}, v_{1}, l_{1}, \dots, u_{i}, v_{i}, 0, u_{i+1}, v_{i+1}, l_{i+1}, \dots, u_{K}, v_{K},l_{K})$.
\item Case 2: $u_{i} u_{i+1}$ has a state that appears twice in the word (the first time in $u_{i}$ and the second time in $u_{i+1}$. Therefore, we can write $u_{i} u_{i+1}$ as $u'_{i} v'_{i} u'_{i+1}$ with $u'_{i}, u'_{i+1} \in U$, $v'_{i} \in V$ and $u'_{i}$ is a strict prefix of $u_{i}$. So $(u_{1}, v_{1}, l_{1}, \dots, u'_{i}, v'_{i}, 1, u'_{i+1}, v_{i+1}, l_{i+1}, \dots, u_{K}, v_{K}, l_{K})$ is a valid decomposition of $w$. This is a contradiction, because $u'_{i} < u_{i}$.
\end{itemize}
Therefore, $\forall i \in \llbracket 1, K-1\rrbracket, l_{i} \ne 0$

Secondly, we can claim that: $\forall i, j \in \llbracket 1, K-1\rrbracket, i \ne j \Rightarrow v_{i} \ne v_{j}$. Suppose there exist $i, j \in \llbracket 1, K-1\rrbracket$ such that $i \ne j$ and $v_{i} = v_{j}$. Therefore, these two following tuples are decompositions of words of paths between $(q_{s}, p_{s})$ and $(q_{t}, p_{t})$ that avoid the picture:
\begin{itemize}
\item $(u_{1}, v_{1}, l_{1}, \dots, u_{i}, v_{i}, l_{i} - 1, \dots, u_{j}, v_{j}, l_{j} + 1, \dots, u_{K}, v_{K}, l_{K})$
\item $(u_{1}, v_{1}, l_{1}, \dots, u_{i}, v_{i}, l_{i} + 1, \dots, u_{j}, v_{j}, l_{j} - 1, \dots, u_{K}, v_{K}, l_{K})$
\end{itemize}

They avoid the picture because at each time, they stay at a distance at most $|v_{i}|$ from $w$, but $|v_{i}| \le k$ and $w$ stay at a distance at least $k$ from the picture. By minimality of $w$ we have these inequalities:
\begin{align*}
u_1v_1^{l_1}\cdots u_iv_i^{l_i}\cdots u_jv_j^{l_j}\cdots u_Kv_K^{l_K} &\le_{\mathrm{lex}} u_1v_1^{l_1}\cdots u_iv_i^{l_i-1}\cdots u_jv_j^{l_j+1}\cdots u_Kv_K^{l_K}, \tag{1}\\
u_1v_1^{l_1}\cdots u_iv_i^{l_i}\cdots u_jv_j^{l_j}\cdots u_Kv_K^{l_K} &\le_{\mathrm{lex}} u_1v_1^{l_1}\cdots u_iv_i^{l_i+1}\cdots u_jv_j^{l_j-1}\cdots u_Kv_K^{l_K}. \tag{2}
\end{align*}

We denote by $z$ the word $v_{i}^{l_{i} - 1} u_{i+1} v_{i+1}^{l_{i+1}} \dots u_{j-1} v_{j-1}^{l_{j-1}} u_{j} v_{j}^{l_{j} - 1}$, we have these inequalities:
\begin{align*}
v_i z v_j &\le_{\mathrm{lex}} z v_jv_j, \tag{1}\\
v_i z v_j &\le_{\mathrm{lex}} v_iv_i z. \tag{2}
\end{align*}

As, $v_{i} = v_{j}$, we have $\overbrace{v_{i} z \le_{\mathrm{lex}}}^{(1)} z v_{i}  \overbrace{\le_{\mathrm{lex}} v_{i} z}^{(2)}$. Therefore we have $v_{i} z = z v_{i}$ which means that there exists a word $x$ and two integers $p, q$ such that $v_{i} = x^p$ and $z = x^q$. But, as there is only one repetition of states in $v_{i}$, we have $p = 1$ and $z = v_{i}^q$. Therefore $(u_{1}, v_{1}, l_{1}, \dots, u_{i}, v_{i}, 2 + q, u_{j+1}, v_{j+1}, l_{j+1}, \dots, u_{K}, v_{K}, l_{K})$ is a valid decomposition of $w$. This is a contradiction, because it has fewer elements than $(u_{1}, v_{1}, l_{1}, \dots, u_{K}, v_{K}, l_{K})$.

Therefore, $\forall i, j \in \llbracket 1, K-1\rrbracket, i \ne j \Rightarrow v_{i} \ne v_{j}$.

Finally, we can claim that, $\forall i: |u_{i}| \le k \mathrm{and} |v_{i}| \le k$, as otherwise it would have more than one repetition of states. And we can also claim that $K - 1 \le k^{k+1}(2d)^k$ because all $v_{i}$ for $i \le K-1$ are different, but there exist only $k^{k+1}(2d)^k$ different words in $V$.

To conclude the proof, we just need to set $M_{1} = k^{k+1}(2d)^k + 1$.
\end{proof}

\begin{remark}
Using basic algebra over the vector space $\mathbb{Q}^d$, it is possible to show that we only need $M_{1} = d$, but if we do that we could not guarantee that the path will not hit the picture.
\end{remark}

\begin{definition}
Let $A = (\Sigma, Q, M, I, F, \delta)$ be a picture walking FNFA on $\mathbb{Z}^d$ and $b \in \Sigma_*^*$ be a picture, let $(d_{1}, \dots, d_{d}) \in (\mathbb{N}^*)^d$ be the dimension of $b$. For every $i \in \llbracket 1, \dots, d\rrbracket$, we define the bottom and top half-spaces of the $i^{\mathrm{th}}$ dimension the sets $H_{i}^{(b)} = \{(v_{1}, \dots, v_{d}) \in \mathbb{Z}^d | v_{i} < 1\}$ and $H_{i}^{(t)} = \{(v_{1}, \dots, v_{d}) \in \mathbb{Z}^d \;|\; v_{i} > d_{i}\}$.

Let $(q_{0}, p_{0}), \dots, (q_{N}, p_{N})$ be a path between two configurations $(q_{s}, p_{s})$ and $(q_{t}, p_{t})$ that avoids the picture. A \textbf{crossed half-spaces sequence} of the path is a finite sequence of half-spaces $H_{1}, H_{2}, \dots, H_{K}$ such that there exists an associated increasing sequence of integer $0 = i_{1} < i_{2} < \dots < i_{K} < i_{K+1} = N+1$ with the following property:
\begin{itemize}
\item $\forall j \in \llbracket 1, K\rrbracket, p_{i_j} \in H_{j} \land p_{i_{j}+1} \in H_{j} \land \dots \land p_{i_(j+1)-1} \in H_{j}$
\item $\forall j \in \llbracket 2, K\rrbracket, p_{i_j} \notin H_{j-1}$
\end{itemize}
I.e. a crossed half-spaces sequence of the path is a sequence of half-spaces such that the path starts in the first half-spaces and each time it leaves a half-space of the sequence, the path is in the next one.
\end{definition}

\begin{remark}
As $(\bigcup_{i=1}^d H_{i}^{(t)} \cup H_{i}^{(b)}) = \mathbb{Z}^d \setminus \operatorname{dom}(b)$, a path that avoids the picture is at each step in a half-space, so for every path between two configurations that avoids the picture, there is at least one sequence crossed half-space by the path.
\end{remark}

\begin{definition}
Let $A = (\Sigma, Q, M, I, F, \delta)$ be a picture walking FNFA on $\mathbb{Z}^d$ and $b \in \Sigma_*^*$ be a picture. Let $(q_{0}, p_{0}), \dots, (q_{N}, p_{N})$ be a path between configurations $(q_{s}, p_{s})$ and $(q_{t}, p_{t})$. We say that a crossed half-spaces sequence $H_{1}, H_{2}, \dots, H_{K}$ is \textbf{good} if it has the following property:
\begin{itemize}
\item $H_{1}, H_{2}, \dots, H_{K}$ is indeed a half-space sequence with associated finite integer sequence $0 = i_{1} < i_{2} < \dots < i_{K} < i_{K+1} = N+1$
\item For all $j \in \llbracket 1, K\rrbracket$, let $p_{i_j} = (a_{1}, \dots, a_{d})$, we have:
$(\exists i \in \llbracket 1, d\rrbracket: H_{j} = H_{i}^{(b)} \land a_{i} < 1-|Q|) \lor (\exists i \in \llbracket 1, d\rrbracket : H_{j} = H_{i}^{(t)} \land a_{i} > d_{i} + |Q|)$
\end{itemize}
I.e.\ a crossed half-space sequence $H_{1}, H_{2}, \dots, H_{K}$ of the path is \textbf{good} if each time we change half-space, we are $|Q|$ deep inside the half-space.
\end{definition}

\begin{remark}
Let $(q_{0}, p_{0}), \dots, (q_{N}, p_{N})$ be a path that stays at a distance at least $k$ from the picture. This path has a good crossed half-spaces sequence. At each step, it is a distance at least $|Q|$ from the picture so there is at least one dimension in which the position is $k$ deep in a half-space.
\end{remark}

\begin{lemma}
Let $A = (\Sigma, Q, M, I, F, \delta)$ be a picture walking FNFA on $\mathbb{Z}^d$ and $b \in \Sigma_*^*$ be a picture. There exists $M_{2} \in \mathbb{N}$ such that for every pair of configurations, if the minimal path that avoids the picture stays at a distance at least $|Q|$ from the picture, then each of its good half-spaces sequences have length at most $M_{2}$.
\end{lemma}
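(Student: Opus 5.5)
The plan is to combine the Navigation Lemma (Lemma 1) with a displacement argument. Fix the minimal picture-avoiding path and assume it stays at distance at least $k=|Q|$ from the picture. By Lemma 1 its word has the shape $u_1 v_1^{l_1} u_2 \cdots u_{M_1} v_{M_1}^{l_{M_1}} u_{M_1+1}$ with $|u_i|,|v_i|\le k$. This splits the time interval $\llbracket 0,N\rrbracket$ into at most $2M_1+1$ \emph{blocks}: the short blocks $u_i$ and the periodic blocks $v_i^{l_i}$. Given a good crossed half-spaces sequence $H_1,\dots,H_K$ with indices $i_1<\dots<i_{K+1}$, I will show that only boundedly many indices $j$ can be \emph{charged} to each block. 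That gives $K\le M_2$ for some $M_2$ depending only on $M_1$ and $d$ (something like $M_2=(2M_1+1)(2d+2)$).

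\textbf{Basic observation.} For each $j<K$ consider the $j$-th switch. At time $i_j$ the path is strictly more than $k$ deep in $H_j$, say $a_i<1-k$ when $H_j=H_i^{(b)}$. At time $i_{j+1}$ the path is outside $H_j$, so $a_i\ge 1$. The $i$-th coordinate must therefore increase by more than $k$ between $i_j$ and $i_{j+1}$. Since moves have $\ell^1$-norm at most $1$, each step changes a coordinate by at most $1$, so $i_{j+1}-i_j>k$. I charge index $j$ to the block containing time $i_{j+1}$.

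\textbf{Short blocks.} A block $u_i$ has length at most $k$. By the separation $i_{j+1}-i_j>k$, it contains at most one time $i_{j+1}$, so each such block receives at most one charge.

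\textbf{Periodic blocks.} This is the main step. Let $\Delta\in\mathbb{Z}^d$ be the displacement of one traversal of $v=v_i$. Inside $v^{l}$, the position at time $t_0+s|v|+r$ (with $0\le r<|v|$) equals $p_{t_0}+s\Delta+\rho(r)$, where $\|\rho(r)\|_\infty<|v|\le k$. So each coordinate is an arithmetic progression in $s$ perturbed by a periodic error of amplitude less than $k$.

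Fix a half-space, say $H_i^{(b)}$. Suppose that inside the block there is a time where $a_i<1-k$ and a later time where $a_i\ge 1$. The gap of more than $k$ cannot be absorbed by the periodic error, so $\Delta_i>0$.

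Now suppose that after this the path returned to depth $<1-k$ and then left $H_i^{(b)}$ again. The return would require a later time with smaller coordinate by more than $k$, which forces $\Delta_i<0$, a contradiction. Hence for each of the $2d$ half-spaces, at most one switch ``deep in $H$ $\to$ outside $H$'' has both endpoints inside the block.

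Indices $j$ whose $i_{j+1}$ lies in the block but whose $i_j$ lies before it need separate treatment. Their intervals $[i_j,i_{j+1}]$ are pairwise disjoint, so all but the last of them have $i_{j+1}\le i_{j+1}'<i_{j'}$ for the next such index $j'$, and $i_{j'}$ is also before the block; this is impossible. So there is at most one such index. Each periodic block therefore receives at most $2d+1$ charges.

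\textbf{Conclusion and main obstacle.} Summing over blocks, and adding $1$ for the final index $K$, bounds $K$ by a constant depending only on $M_1$ and $d$. The step I expect to be most delicate is the periodic-block argument: the perturbation bound $\|\rho(r)\|_\infty<k$ must be strictly below the depth threshold. I will check the inequalities carefully, namely strict $<1-k$ versus $\ge 1$ against an error of at most $|v|-1\le k-1$. If they turn out off by one, I will instead compare only positions at the same phase $r$, where the progression is exactly linear.
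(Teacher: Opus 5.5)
Your proposal is correct and follows essentially the paper's route. You use the Lemma 1 decomposition, allow at most one switch per short block $u_i$ because consecutive switches are more than $k$ steps apart, and allow at most $2d$ switches per periodic block $v_i^{l_i}$ plus the single switch straddling its start, since leaving the same half-space twice would force the per-period displacement $\Delta_i$ to take both signs.

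One justification in the periodic step needs tightening. The pointwise amplitude bound $|\rho_i(r)|\le k-1$ only gives $|\rho_i(r_2)-\rho_i(r_1)|\le 2k-2$, which is too weak. Instead, use the unit-step property you already invoked in your basic observation: $|\rho_i(r_2)-\rho_i(r_1)|\le |r_2-r_1|\le k-1$. Equivalently, as in the paper, any $|v|$ consecutive moves inside $v^{l}$ form a cyclic rotation of $v$ and so have displacement exactly $\Delta$. This leaves $(s_2-s_1)\Delta_i\ge 2$, so there is no off-by-one problem.
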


\begin{proof}
Let $A = (\Sigma, Q, M, I, F, \delta)$ be a picture walking FNFA on $\mathbb{Z}^d$ and $b \in \Sigma_*^*$ be a picture. Let $(q_{s}, p_{s})$, $(q_{t}, p_{t})$ be two configurations and $(q_{0}, p_{0}), \dots, (q_{N}, p_{N})$ the minimal path between these two configurations that avoids the picture and suppose it stays at a distance at most $|Q|$ from the picture. Let $H_{1}, H_{2}, \dots, H_{K}$ be a good crossed half-space sequence of the path.

We denote $w$ the word, by applying the Lemma 1, there exists $M_{1} \in \mathbb{N}$ that depends only on the automaton, $u_{1}, \dots, u_{M_1}, v_{1}, \dots, v_{M_1}$ words of size at most $|Q|$ and $l_{1}, \dots, l_{M_1} \in \mathbb{N}$ such that $w = u_{1} v_{1}^{l_{1}} \dots u_{M_1} v_{M_{1}}^{l_{M_1}}$.

Let $i \in \llbracket 1, M_{1}\rrbracket$, suppose at the beginning of $u_{i}$ we are in half-space $H_{j}$ for some $j \in \llbracket 1, K\rrbracket$. Because the half-spaces sequence is good, if $u_{i}$ leaves $H_{j}$ and is now in $H_{j+1}$, it will need more than $|Q|$ steps to leave $H_{j+1}$, but $u_{i}$ has length less than $|Q|$. Therefore $u_{i}$ can leave at most one half-space of the sequence.

Let $i \in \llbracket 1, M_{1}\rrbracket$, suppose at the beginning of $v_{i}^{l_i}$ we are in half-space $H_{j_1}$ for some $j_{1} \in \llbracket 1, K\rrbracket$. There exists $j_{2} \in \llbracket 1, K\rrbracket$ such that $H_{j_1}, \dots, H_{j_2}$ is a half-space sequence for the subpath $v_{i}^{l_i}$.

Suppose there are $l_{1}, l_{2} \in \llbracket j_{1} + 1, j_{2}\rrbracket$ such that $l_{1} \ne l_{2}$ and $H_{l_1} = H_{l_2} = H_{r}^{(b)}$ for some $r \in \llbracket 1, d\rrbracket$. Let $p_{l_1} = (a_{1}^{(1)}, \dots, a_{d}^{(1)})$ and $p_{l_2} = (a_{1}^{(2)}, \dots, a_{2}^{(2)})$ be the position in which we are for the first time in $H_{l_1}$ and $H_{l_2}$. Because the overall crossed half-space sequence is good, we have $a_{r}^{(1)} < 1 - |Q|$ and $a_{r}^{(2)} < 1 - |Q|$. But as we have left $H_{l_1}$, there exists $l_{3} \in \llbracket l_{1}, l_{2}\rrbracket$ such that $p_{l_3} = (a_{1}^{(3)}, \dots, a_{d}^{(3)})$ with $a_{r}^{(3)} \ge 1$. This means that between $p_{l_1}$ and $p_{l_3}$, we have a sequence of $|v_{i}| \le |Q|$ moves that goes positively in the $r^{\mathrm{th}}$ and between $p_{l_3}$ and $p_{l_2}$, we have a sequence of $|v_{i}| \le |Q|$ moves that goes negatively in the $r^{\mathrm{th}}$ dimension. This is a contradiction, because sequence of $|v_{i}|$ moves are exactly $v_{i}$ in a different order. Therefore we cannot have $l_{1}, l_{2} \in \llbracket j_{1} + 1, j_{2}\rrbracket$ such that $l_{1} \ne l_{2}$ and $H_{l_1} = H_{l_2} = H_{r}^{(b)}$. Similarly, we cannot have $l_{1}, l_{2} \in \llbracket j_{1} + 1, j_{2}\rrbracket$ such that $l_{1} \ne l_{2}$ and $H_{l_1} = H_{l_2} = H_{r}^{(t)}$.

Therefore $v_{i}^{l_i}$ can leave at most $2d$ half-spaces of the sequence.

Finally, the total path $u_{1} v_{1}^{l_{1}} \dots u_{M_{1}}^{l_{M_1}}$ can leave at most $M_{1}(2d+1)$ half-spaces so $K \le M_{1}(2d+1) + 1$.
To conclude the proof, we just need to set $M_{2} = M_{1}(2d+1)+1$.
\end{proof}

\begin{definition}
Let $A = (\Sigma, Q, M, I, F, \delta)$ be a picture walking FNFA on $\mathbb{Z}^d$ and $b \in \Sigma_*^*$ be a picture. Let $A' = (\Sigma, Q', M, I', F', \delta')$ be a picture walking NFA and a function $f: Q' \times \mathbb{Z}^d \to Q \times \mathbb{Z}^d$. Let $(p_{s}, q_{s})$ and $(p_{t}, q_{t})$ be two configurations of $A$ such that there exists a path between them. We say that $(A', f)$ can simulate the path from $(q_{s}, p_{s})$ to $(q_{t}, p_{t})$ if there exists a finite sequence of configurations of $A'$ $(q'_{0}, p'_{0}), \dots, (q'_{N}, p'_{N})$ such that:
\begin{itemize}
\item $f(q'_{0}, p'_{0}) = (q_{s}, p_{s})$
\item $f(q'_{N}, p'_{N}) = (q_{t}, p_{t})$
\item $\forall i \in \llbracket 0, N-1\rrbracket, (q'_{i+1}, p'_{i+1}-p'_{i}) \in \delta'(q'_{i}, b_{p'_i})$ where $b_{p'_i}$ is the letter read at position $p'_{i}$ on the picture $b$.
\end{itemize}
\end{definition}

\begin{proposition}
\label{prop:SimulateClose}
Let $A = (\Sigma, Q, M, I, F, \delta)$ be a picture walking FNFA on $\mathbb{Z}^d$ and $b \in \Sigma_*^*$ be a picture. There exists a picture walking NFA $A' = (\Sigma, Q', M, I', F', \delta')$ and a function $f: Q' \times \mathbb{Z}^d \to Q \times \mathbb{Z}^d$ that can simulate exactly the paths between two configurations of $A$ that stay at a distance at most $C$ from the picture for any $C \in \mathbb{N}$.
\end{proposition}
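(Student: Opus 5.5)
The natural reading of Proposition~\ref{prop:SimulateClose} is that for each fixed $C$ we build an NFA $A'_C$ that never leaves $\operatorname{edom}(b)$. Its configurations encode configurations of $A$ whose positions lie in the $C$-neighbourhood $N_C(b)=\{v : d(v,\operatorname{dom}(b))\le C\}$. The plan is to fold $N_C(b)$ onto $\operatorname{edom}(b)$ by storing the overshoot in the finite state.

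\textbf{Construction.} Take
\[
Q' = Q \times \llbracket -C, C\rrbracket^d .
\]
A state $(q,o)$ at position $p'\in\operatorname{edom}(b)$ represents the configuration $f((q,o),p') = (q, p'+o)$. We impose the normal form $o_i\neq 0 \Rightarrow p'_i \in\{0, d_i+1\}$, with the sign of $o_i$ pointing outward (extending $f$ arbitrarily elsewhere). So the head of $A'$ sits at the coordinatewise projection of the true position onto the box $\llbracket 0,d_1+1\rrbracket\times\cdots\times\llbracket 0,d_d+1\rrbracket$, and $o$ records the remaining displacement. Because that box is exactly $\operatorname{edom}(b)$ and $\|o\|_\infty\le C$, $f$ is a bijection between normal-form configurations and $Q\times N_C(b)$. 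For $C=0$ this is the identity simulation.

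\textbf{Transitions.} $A'$ must read the same symbol $A$ would read at $p'+o$. If $o=0$, that symbol is $b_{p'}$. If $o\ne 0$, then $p'+o\notin\operatorname{dom}(b)$ and the symbol is $\#$; in that case $p'$ is also outside $\operatorname{dom}(b)$, so $A'$ reads $\#$ as well. Hence the symbol under $A'$'s head determines the symbol $A$ sees, and no lookahead is needed. This is the key observation.

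For each $(q',m)\in\delta(q,\sigma)$ we compute the new true position $p'+o+m$. If it still lies in $N_C(b)$, we renormalize it as (projection onto the box, remainder). Since $m$ has $\ell^1$-norm at most $1$, or more generally lies in the finite set $M$, the projection changes by an element of $M$ or by $0$ in each affected coordinate. So the head move is allowed, possibly after splitting it into a few $\ell^1$-unit moves using auxiliary intermediate states, and $o$ is updated accordingly. The normal form guarantees that the new head position is computable from $o$ and $m$ together with the knowledge of whether $p'_i\in\{0,d_i+1\}$. That knowledge is itself locally readable: $o_i\neq 0$ forces it, and when $o_i=0$ the only question is whether the step crosses the boundary, which is decided by the move itself. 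If the new position leaves $N_C(b)$, the transition is simply omitted.

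\textbf{Correctness.} By induction on path length, sequences of $A'$-configurations in normal form correspond bijectively under $f$ to paths of $A$ that stay inside $N_C(b)$. Each $A$-step is simulated by one $A'$-step, or a bounded number of steps through auxiliary states that $f$ maps to the pre-step configuration. This gives exactly the required simulation notion, in both directions ("exactly" means no spurious paths).

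\textbf{Main obstacle.} The one delicate point is the case $o_i=0$ with $p'_i$ at the edge of $\operatorname{dom}(b)$. Here we must know whether a step leaves the domain, and $A'$ at position $p'$ cannot see whether $p'+m$ is in the picture. I would handle it by making the move first and then correcting. $A'$ moves the head by $m$, which lands inside $\operatorname{edom}(b)$ automatically since $p'\in\operatorname{dom}(b)$, and keeps $o$ unchanged. After a one-step check of whether it now reads $\#$, the configuration is already in normal form. Thus all decisions rely only on the symbol currently read plus finite-state memory, which is what an NFA needs.
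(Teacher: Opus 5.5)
\textbf{There is a genuine gap in how $A'$ updates the head position, though it is fixable.}

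Your strategy matches the paper's: fold the $C$-neighbourhood onto the picture and keep the displacement in the finite state. Your choice to park the head on the frame box $\operatorname{edom}(b)$ even repairs a flaw in the paper. The paper's $A'$ keeps the head in $\operatorname{dom}(b)$ and feeds $\delta(q,a)$ the picture symbol under the head even when the offset is nonzero; with your normal form the head reads $\#$ exactly when $A$ does.

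\textbf{The gap.} To process a move $\pm e_i$ ($e_i$ the $i$th unit vector) when $o_i=0$, $A'$ must know whether $p'_i\in\{0,d_i+1\}$. Neither the state $(q,o)$ nor the symbol read tells it. Take $d=2$, $d_1\ge 3$, $C\ge 3$ and $o=(0,-2)$, so $p'_2=0$. The head positions $(0,0)$ and $(3,0)$ give the same state, and both read $\#$. Yet for $m=-e_1$ the correct responses differ: keep the head and set $o=(-1,-2)$, versus move the head to $(2,0)$.

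Your claim that this knowledge is ``locally readable'' and ``decided by the move itself'' fails here, for two reasons.
\begin{itemize}
\item From $(0,0)$ the trial step $-e_1$ leaves $\operatorname{edom}(b)$, which kills the run.
\item When a trial step does stay in the box, reading $\#$ afterwards says nothing about coordinate $i$. The head is already on the frame because of another coordinate.
\end{itemize}
Your ``main obstacle'' paragraph only treats $p'\in\operatorname{dom}(b)$. That forces $o=0$ and is the trivial case.

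Nor can $A'$ simply guess. The wrong guess at $(3,0)$ produces head $(3,0)$ with $o=(-1,-2)$. Three moves $+e_2$ then put the head on $(3,1)$, inside the picture, with $o=(-1,0)$. Now $f$ points to the picture cell $(2,1)$, while $A'$ reads $\#$ or $b_{(3,1)}$. That is a spurious simulation, so ``exactly'' fails.

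\textbf{How to close it.} You need a way to track, for each coordinate, whether the head is on the low face, on the high face, or strictly between. Either of the following works.
\begin{itemize}
\item Add flags $s\in\{-1,0,1\}^d$ to the state. When the head moves in coordinate $i$, refresh $s_i$ with a bounded probe: step once inward in every coordinate $j\neq i$ with $s_j\neq 0$, check for $\#$, and step back. The probe is possible since $d_j\ge 1$, and every probe position stays in $\operatorname{edom}(b)$. After the inward steps, $\#$ occurs iff coordinate $i$ is on the frame.
\item Keep the paper's clamping into $\operatorname{dom}(b)$, but use $\delta(q,\#)$ whenever $o\neq 0$. When $o_i\neq 0$, update $o_i$ in place without moving. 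When $o_i=0$, make the paper's trial step, and if it reads $\#$, step back and set $o_i=\pm 1$. Because the head starts in $\operatorname{dom}(b)$, the trial step lands in $\operatorname{edom}(b)$ and reads $\#$ exactly when coordinate $i$ leaves the picture, so no flags are needed.
\end{itemize}

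A minor point: with the head on $\operatorname{edom}(b)$, offsets with $\|o\|_\infty\le C$ represent $N_{C+1}(b)$, not $N_C(b)$. Your bijection claim is therefore off by one. This is harmless, since you omit transitions that leave $N_C(b)$.
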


\begin{proof}
Let $A = (\Sigma, Q, M, I, F, \delta)$ be a picture walking FNFA on $\mathbb{Z}^d$, $b \in \Sigma_*^*$ be a picture and $C \in \mathbb{N}$. We denote $(d_{1}, \dots, d_{d})$ the dimension of the picture. The idea is to construct $A' = (\Sigma, Q', M, I', F', \delta')$ that simulates $A$ normally but when it leaves the picture, we use new states to remember in which position it is.
The way we do that is we create states that will store how much we are deep in each direction positively or negatively from the picture (as paths are at a distance at most $C$, the number of states will be finite). Now we try to simulate as $A$ would do but we perform each step in two stages:
\begin{itemize}
\item The first one will do the step normally
\item The second one will check if we have left the picture and it will bring us back inside it and change state to remember we moved outside the picture.
\end{itemize}

We give the formal definition of the automaton. Denote $Q'_{\mathrm{in}} = Q \times (\llbracket -C, C\rrbracket)^d$ and $Q'_{\mathrm{check}} = Q \times M \times (\llbracket -C, C\rrbracket)^d$. Let $Q' = Q'_{\mathrm{in}} \cup Q'_{\mathrm{check}}$ and define $f: Q' \times \mathbb{Z}^d \to Q \times \mathbb{Z}^d$ as follows:
\[
f(q,p)=\begin{cases}
(q',p+p') & \text{if } q=(q',p')\in Q'_{\mathrm{in}},\\
(q',p+p') & \text{if } q=(q',m',p')\in Q'_{\mathrm{check}}.
\end{cases}
\]
We define $\delta'$ as follows:
\item $\forall (q, p) \in Q'_{\mathrm{in}}, \forall a \in \Sigma$:
$\delta'((q, p), a) = \bigcup_{(q', m') \in \delta(q, a)} ((q', m', p), m') \subseteq Q'_{\mathrm{check}} \times M$
\item $\forall (q, m, p) \in Q'_{\mathrm{check}}, \forall a \in \Sigma:$
\[
\delta'((q,m,p),a)=\begin{cases}
((q,p+m),-m) & \text{if } a=\#,\\
((q,\widetilde p),0) & \text{otherwise}.
\end{cases}
\]
Above, $\widetilde p$ resets the coordinate moved by $m$.

Let $I' = I \times (\llbracket -C, C\rrbracket)^d \subseteq Q'_{\mathrm{in}}$ and $F' = F \times (\llbracket -C, C\rrbracket)^d \subseteq Q'_{\mathrm{in}}$.

Let $(q_{s}, p_{s})$ and $(q_{t}, p_{t})$ be two configurations of $A$ and $(q_{0}, p_{0}), \dots, (q_{N}, p_{N})$ be a path between these two configurations and suppose it stays at a distance at most $|Q|$ from the picture. For every $i \in \llbracket 0, N\rrbracket$ we have $p_{i} = (p_{i}^{(1)}, \dots, p_{i}^{(d)}) \in \mathbb{Z}^d$. We can construct a sequence that simulates the path:
\begin{itemize}
\item Start with $(q'_{0}, p'_{0})$ where
\item $p'_{0} = (p_{0}^{\prime(1)}, \dots, p_{0}^{\prime(d)})$ where $\forall i \in \llbracket 1, d\rrbracket, p_{0}^{\prime(i)} = (\operatorname{cases}(1 \quad \mathrm{if} p_{0}^{(i)} < 1, p_{0}^{(i)} \quad \mathrm{if} 1 \le p_{0}^{(1)} \le d_{i}, d_{i} \quad \mathrm{if} d_{i} < p_{0}^{(i)}))$
\item $q'_{0} = (q_{0}, p_{0} - p'_{0}) \in Q'_{\mathrm{in}}$
\end{itemize}

\item For all $i \in \llbracket 0, N-1\rrbracket$, we can construct $(q'_{2i+1}, p'_{2i+1})$ and $(q'_{2i+2}, p'_{2i+2})$ from $(q'_{2i}, p'_{2i})$ with $q'_{2i} = (q'', p'') \in Q'_{\mathrm{in}}$:
\begin{itemize}
\item $q'_{2i+1} = (q_{i+1}, p_{i+1}-p_{i}, p'') \in Q'_{\mathrm{check}}$
\item $p'_{2i+1} = p'_{2i}$
\item $(q'_{2i+2}, m) = \delta'(q'_{2i+1}, b_{p'_{2i+1}}) \in Q'_{\mathrm{in}}$
\item $p'_{2i+2} = p'_{2i} + m$
\end{itemize}

By construction, we have $f(q'_{0}, p'_{0}) = (q_{0}, p_{0})$ and $\forall j \in \llbracket 0, 2N\rrbracket, p'_{j} \in \operatorname{edom}(b)$ and also $f(q'_{2N}, p'_{2N}) = (q_{N}, p_{N})$.

Reciprocally, if we have a sequence of configuration of the NFA, by construction, it simulates one path between two configurations of the FNFA that stays at a distance at most $C$ from the picture.
\end{proof}

\begin{theorem}
Let $A = (\Sigma, Q, M, I, F, \delta)$ be a picture walking FNFA on $\mathbb{Z}^d$ and $b \in \Sigma_*^*$ be a picture. There exists a picture walking NFA $A' = (\Sigma, Q', M, I', F', \delta')$ such that $L(A) = L(A')$.
\end{theorem}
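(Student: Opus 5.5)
The plan is to build $A'$ as the near-picture simulator of Proposition~\ref{prop:SimulateClose}, with constant $C$, enlarged by finitely many macro-transitions, one for each far excursion.

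Fix an accepting run of $A$ on $b$ from $(q_0,(1,\dots,1))$ to a final state. We may assume it is, piece by piece, a minimal path. Cut it at the maximal subpaths that stay at distance at least $|Q|$ from the picture. Everything else stays within distance $|Q|$, plus one step. So the remaining segments are paths staying at distance at most $C:=|Q|+\max_{m\in M}\|m\|_\infty$, and Proposition~\ref{prop:SimulateClose} simulates them exactly, remembering the offset from the nearest in-picture cell. Each far segment leaves a near configuration and returns to one. Replacing it by the minimal path that avoids the picture between the same endpoints keeps the run valid. Outside the picture the letter read is always $\#$, so the only thing that matters is which pairs of near configurations are linked by such a path. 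The key point is that this relation must be computable by an NFA that walks inside the picture.

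To compute it, I would use the Navigation Lemma (Lemma 1) and Lemma 2. The minimal far path has the zig-zag form $u_1v_1^{l_1}\cdots u_{M_1}v_{M_1}^{l_{M_1}}u_{M_1+1}$, with boundedly many choices for the words $u_i,v_i$. It also crosses at most $M_2$ half-spaces, each entered at depth at least $|Q|$. Its displacement is $\sum_i \Delta(u_i)+\sum_i l_i\Delta(v_i)$, where $\Delta$ denotes the net move of a word. So $A'$ nondeterministically guesses the finite skeleton: the words $u_i,v_i$, the states, and the sequence of half-spaces. It then only has to check two things. First, that the target offset equals the source offset plus this sum for some $l_i\in\mathbb N$. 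Second, that the resulting path really avoids the picture.

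The avoidance is guaranteed segment by segment. Each run $v_i^{l_i}$ is a straight ray plus bounded wobble. It lies in a half-space, which is a region described by one linear inequality in a single coordinate against $1$ or $d_i$. So the avoidance condition is a linear constraint comparing $l_i$ with the picture dimensions $d_r$ and the current position.

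Next, I need $A'$ to realize "there exist $l_i$ satisfying these linear equalities and inequalities" while staying inside the picture. This is a Presburger condition whose parameters are the source position, the target position, and $(d_1,\dots,d_d)$, and the existential quantifier can be eliminated. The Landing Lemma says that the set of landing points on the boundary hyperplane of a single half-space excursion is semilinear. It is realized by a head that walks along the face, counting moduli in its states and using the boundary faces as a ruler. For example, walking $l\cdot\Delta(v)$ on the hyperplane is simulated by moving in direction $\Delta(v)$ restricted to that face, and reflecting or combining coordinates as needed. I would apply this iteratively, one half-space at a time, for at most $M_2$ phases. Each phase moves the in-picture head from the projected entry point to the projected landing point of the next half-space transition. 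In-picture moves reproduce the displacement in the coordinates parallel to the face. The perpendicular depth is tracked by periodicity (mod the period of $\Delta(v_i)$) together with the bounded thresholds $|Q|$, and the picture edges serve to test inequalities such as reaching a corner.

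The main obstacle is this last step. Here the geometry of the true path (straight rays possibly leaving the picture's shadow in several coordinates at once, and hence a displacement larger than the picture) must be realized by in-picture motion. My first attempt would be to move each coordinate separately: for coordinate $r$, the constraint on $\sum l_i\Delta(v_i)_r$ involves only its residue modulo a bounded period, together with comparisons to $0$ and $d_r$ once the depth exceeds $|Q|$. Beyond the picture's extent the exact depth is irrelevant except modulo that period. So in-picture walking plus a bounded counter suffices, and soundness follows since every guessed skeleton satisfying the checks is an actual avoiding path. Finally, setting $I'$ and $F'$ as in Proposition~\ref{prop:SimulateClose} gives $L(A')=L(A)$.
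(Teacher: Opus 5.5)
There is a genuine gap, and it sits at the step you flag as the main obstacle: turning ``there exist $l_i\in\mathbb{N}$ satisfying the displacement equations and the avoidance inequalities'' into the motion of an NFA that stays in the picture. Quantifier elimination does not achieve this by itself, because a picture-walking NFA cannot evaluate an arbitrary quantifier-free Presburger formula in the entry and landing coordinates and $d_1,\dots,d_d$. The mechanism you offer instead---treat each coordinate $r$ separately by residues and comparisons with $0$ and $d_r$, on the grounds that depth beyond the picture's extent matters only modulo a period---is unsound, because the same $l_i$ drive several coordinates at once.

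For a concrete case in dimension $3$, let $A$ leave the top face at $(x_0,y_0)$ and climb a bounded height. It then iterates a loop of net move $(1,1,0)$ some $l_1$ times, and then a loop of net move $(-1,0,-1)$ until it meets the side face $x=d_1+1$ at $(y_t,z_t)$. For such a landing it can only sink below the top face while $x>d_1$. Eliminating the loop counts, every such landing satisfies $y_t-y_0=(d_1-x_0)+(d_3-z_t)+\kappa$ for a constant $\kappa$ depending only on $A$. So the overshoot past the edge $x=d_1$ is transferred \emph{exactly} into both the $y$-displacement and the depth $d_3-z_t$. No combination of per-coordinate residue and threshold conditions expresses this equation. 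Likewise, tracking the overshoot only modulo a period lets the head end the second phase after the wrong number of iterations. Your $A'$ therefore accepts landings that $A$ cannot reach, and $L(A')\subseteq L(A)$ fails. The relation is still NFA-checkable---walk diagonally on the top face to the edge, then in direction $(0,1,-1)$ down the side face---but only by motion that couples the coordinates.

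That coupled realization is what the paper's argument provides and yours lacks. Guessing the Lemma~1 skeleton is a reasonable substitute for the paper's Parikh step, since it presents far displacements as explicit linear sets. It does not, however, remove the need to realize such sets by in-picture motion. The paper proceeds one half-space at a time. Inside a single half-space the only constraint is the depth, which a pushdown automaton can track, so by Parikh's theorem the landing displacements form a semilinear set. An NFA realizes this set by walking the generators of each linear set in an order given by the Steinitz lemma, which keeps the walk within bounded distance of the segment from entry point to landing point. These pieces are chained by induction over the at most $M_2$ half-space changes of Lemma~2.

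Two smaller points. First, a single run $v_i^{l_i}$ need not lie in one half-space; the proof of Lemma~2 allows up to $2d$ changes, so avoidance along it is a disjunction of linear conditions. Second, to apply Lemma~1 you should take the globally minimal accepting run, whose far pieces are themselves minimal and stay far. Substituting the minimal avoiding path between their endpoints does not work, because that path might come within $|Q|$ of the picture.
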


\begin{proof}
Let $A = (\Sigma, Q, M, I, F, \delta)$ be a picture walking FNFA on $\mathbb{Z}^d$ and $b \in \Sigma_*^*$ be a picture. We want to construct a NFA $A' = (\Sigma, Q', M, I', F', \delta')$ that recognizes the same language as $A$. For the sake of simplicity, with Proposition~\ref{prop:SimulateClose} we will assume that $A'$ can freely move outside the picture as long as it stays at a distance at most $|Q|$ from the picture. Moreover, we will also assume that $A$ need to come back to the picture in order to accept a word.

Now, consider $b \in L(A)$, therefore there exists two configurations $(q_{s}, p_{s})$ and $(q_{t}, p_{t})$ such that:
\begin{itemize}
\item $q_{s} \in I$ and $p_{s} = (1, \dots, 1)$
\item $q_{t} \in F$ and $p_{t} \in \operatorname{dom}(b)$
\item There exists a path between $(q_{s}, p_{s})$ and $(q_{t}, p_{t})$
\end{itemize}
Consider $w = (q_{0}, p_{0}), \dots, (q_{N}, p_{N})$ the minimal path between $(q_{s}, p_{s})$ and $(q_{t}, p_{t})$. This path can be divided into $r$ subpaths $w_{1}, \dots, w_{r}$ for some $r \in \mathbb{N}^*$ such that:
\begin{itemize}
\item $w = w_{1} \dots w_{r}$
\item $\forall \in \llbracket 1, r\rrbracket, i \mathrm\{odd\} \Rightarrow w_{i}$ stays at a distance at most $|Q|$ from the picture.
\item $\forall \in \llbracket 1, r\rrbracket, i \mathrm\{even\} \Rightarrow w_{i}$ stays at a distance at least $|Q|$ from the picture.
\end{itemize}
By Proposition 1, we can already simulate $w_{i}$ when $i$ is odd.
For $i$ even, we know that $w_{i}$ are minimal paths that stay at a distance at least $|Q|$ from the picture, therefore by Lemma 2, there exists $M \in \mathbb{N}$ such that each of their good crossed half-space sequences are length at most $M$. The only that remains to do is to show that these paths can be simulated by a NFA.

We will show this result by induction on the length of crossed-half spaces sequence. For all $L \in \mathbb{N}^*$, for all $q, q' \in Q$ and for all $H$ half-space, let $P_{L, q, q', H}$ be the following property $P_{L, q, q', H}$:

\vspace{0.2cm}
\noindent
\fbox{\parbox{\textwidth}{There exists an NFA $A'$ such that for every path $(q_{0}, p_{0}), \dots, (q_{N}, p_{N})$ with $q_{0} = q$, $q_{N} = q'$ and $p_{0}, p_{N}$ at a distance $|Q|$ from the picture, and $(q_{1}, p_{1}), \dots (q_{N-1}, p_{N-1})$ a path that starts in the half-space $H$, staying at a distance at least $|Q|$ from the picture and having a good crossed-half-space sequence of length $L$, the automaton $A'$ can go from $p_{0}$ to $p_{N}$ (inside the picture); but such a transition is not possible otherwise.}}
\vspace{0.2cm}

\textbf{Base Case:} Let $L = 1$ and $q, q' \in Q$ be two states and $H$ a half-space. There is a push-down automaton $B$ that can compute all sequence of moves that gives path of the property:
\begin{itemize}
\item $B$ starts with an empty stack and in state $q$
\item $B$ simulates $A$ if it was reading only $\#$
\item Each time $A$ goes one step deeper in $H$, $B$ add a character on its stack
\item Each time $A$ goes one step back in $H$, $B$ delete a character of its stack
\item $B$ accepts only if the stack is empty and if we are in state $q'$
\end{itemize}

This PDA recognizes exactly paths of the property. By the Parikh's theorem the set of Parikh vectors of this language is a semi-linear set. Therefore by properties of semi-linear sets the set of position we could have landed is also exactly a semi-linear set i.e. it has the shape $\bigcup_{j=1}^N \{v_{j} + \sum_{i=1}^{n_j} a_{i, j} v_{i, j} | a_{1, j}, \dots, a_{n_{j}, j} \in \mathbb{N}\}$ for some vectors $v_{1}, \dots, v_{N}, v_{1,1}, \dots, v_{n_{j}, N} \in \mathbb{Z}^d$ and $N \in \mathbb{N}$. Now we can create a NFA $A'$ that can go in each of these positions.

From the Steinitz Lemma \cite{Steinitz,Ambrus_2026}, we know that is possible to go in each of these positions while staying at a bounded distance from the segment defined by the starting and ending position. Thus, to simulate it we just have to allow the NFA to go at a bounded distance outside the picture.

\textbf{Base Case:} Let $L > 1$ such that $\forall q, q', H$ we have $P_{L-1, q, q', H}$. Let $q, q' \in Q$ be two states and $H$ a half-space. For all states $q'' \in Q$, there is NFA $A'_{q''}$ that can go in every positions we could have landed after $L-1$ crossed half space. For every $q''$ and every $(L-1)^{\mathrm{th}}$ crossed half-space, if the automaton tries to land at a distance strictly more than $|Q|$ from the picture, then the automaton would have encountered an edge of the picture. We call $H'$ the half-space where the automaton wanted to go and it is the $L^{\mathrm{th}}$ half-space of the sequence.

There is a push-down automaton $B_{A'_{q''}, H'}$ that can accept the paths that end correctly:
\begin{itemize}
\item $B_{A'_{q''}, H'}$ starts with an empty stack
\item $B_{A'_{q''}, H'}$ simulates $A'_{q''}$ in $H'$
\item $B_{A'_{q''}, H'}$ then simulates $A$ starting from state $q''$ as if it was only reading $\#$
\item Each time the simulation goes one step deeper in $H'$, $B_{A'_{q''}, H'}$ adds a character on its stack
\item Each time the simulation goes one step back in $H'$, $B_{A'_{q''}, H'}$ deletes a character of its stack
\item $B_{A'_{q''}, H'}$ accepts only if the stack is empty and if we are in state $q'$
\end{itemize}

The union recognizes exactly ends of paths of the property. By the Parikh's theorem the set of Parikh vectors of $L(B_{A'_{q''}, H'})$ is a semi-linear set.

Therefore by properties of semi-linear sets the set of positions we could have landed in is also exactly a semi-linear set i.e. it has the shape $\bigcup_{j=1}^N \{v_{j} + \sum_{i=1}^{n_j} a_{i, j} v_{i, j} | a_{1, j}, \dots, a_{n_{j}, j} \in \mathbb{N}\}$ for some vectors $v_{1}, \dots, v_{N}, v_{1,1}, \dots, v_{n_{j}, N} \in \mathbb{Z}^d$ and $N \in \mathbb{N}$.

Now we can create a NFA $A''$ that can go in each of these positions. As NFA are closed under union, we can finally create an automaton such that $P_{L_, q, q', H'}$ is true.

Let $A''$ be the NFA which is the union of each NFA of each $P_{L, q, q', H}$ for all $L \le M, q \in Q, q' \in Q$ and $H$ half-spaces. This automaton can simulate minimal paths that stay at a distance at least $|Q|$ from the picture.

Therefore, it is possible to construct a NFA $A'$ prove our theorem.
\end{proof}

\begin{corollary}
FNFA = NFA in all dimensions.
\end{corollary}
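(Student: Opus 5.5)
The corollary asserts two inclusions, and I would prove them separately. For NFA $\subseteq$ FNFA, take an NFA $A$ and build an FNFA that never uses its extra freedom. The remark following the acceptance definition says the class of NFA languages is unchanged if NFA are defined as those FNFA that never attempt to leave $\operatorname{edom}(b)$. So it suffices to modify $A$ so that it never steps outside $\operatorname{edom}(b)$ along any run. Concretely, the automaton tracks in its finite state, for each coordinate, whether it currently reads $\#$ and on which side of the frame it lies. This information is available locally, since a cell of $\operatorname{edom}(b)\setminus\operatorname{dom}(b)$ reads $\#$ and its neighbours reveal the boundary direction. The automaton then simply deletes any transition that would move it out of $\operatorname{edom}(b)$. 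With this modification, runs of the FNFA are exactly runs of the NFA, so the two accept the same language.

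For FNFA $\subseteq$ NFA I would invoke the preceding Theorem. Given an FNFA $A$, it produces an NFA $A'$ with $L(A')=L(A)$, so every FNFA language is an NFA language. The one point to check is that the NFA $A'$ constructed in that proof depends only on $A$ and not on the picture $b$, even though $b$ appears among the quantified data. The construction uses only $Q$, $d$, and the constants $M_1$ and $M_2$ from Lemma 1 and Lemma 2, which depend only on $A$; the semi-linear landing sets are likewise defined uniformly. I would therefore make this uniformity explicit by reading the Theorem as stated for all $b$ simultaneously.

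The main obstacle is the first inclusion. A frameless automaton started in the corner might in principle drift outside on an NFA-irrelevant branch, so I would make sure that \emph{every} attempted move out of $\operatorname{edom}(b)$ is removed, and not only the moves on accepting runs. A clean way to do this is to have the simulating automaton remember the last $\#$-pattern it read. From that it can detect whether a proposed move leads into $\mathbb{Z}^d\setminus\operatorname{edom}(b)$, and forbid such moves. Both directions then hold for all $d$, giving FNFA $=$ NFA.
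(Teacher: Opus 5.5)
Your overall route is the paper's: the corollary is the preceding Theorem, which gives FNFA $\subseteq$ NFA, together with the easy inclusion NFA $\subseteq$ FNFA, which the paper treats as immediate from the remark after the acceptance definition. Your point that the NFA $A'$ must not depend on $b$ is the right way to read that Theorem.

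The step that fails as written is your mechanism for the easy inclusion. An automaton sees only the symbol under its head, so its neighbours reveal nothing unless it actually visits them. Passively remembering the last $\#$-pattern it read therefore cannot detect an exit from $\operatorname{edom}(b)$. Take dimension $2$ and a run that steps from $(1,1)$ to $(0,1)$ and then repeatedly moves by $+e_2$. It reads one non-$\#$ symbol followed only by $\#$'s, whatever $d_2$ is. Yet $(0,d_2+1)\in\operatorname{edom}(b)$ while $(0,d_2+2)\notin\operatorname{edom}(b)$. No finite record of past reads tells the simulator when to forbid the next step, so it cannot ``delete any transition that would move it out of $\operatorname{edom}(b)$'' on that basis.

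The repair is easy, and either of two versions works.
\begin{itemize}
\item \emph{Active probing.} After each step taken along a face of the outer shell, step back into the picture along the out-of-range coordinates. Read whether the remaining coordinates are still within $\llbracket 1, d_i\rrbracket$, then return. All probe positions stay inside $\operatorname{edom}(b)$.
\item \emph{Offset storage.} Keep the head on the clamped position in $\operatorname{dom}(b)$ and store the offset in $\{-1,0,1\}^d$ in the state. When a coordinate might leave $\llbracket 1,d_i\rrbracket$, test a single outward step and return. Forbid any move that would push an offset coordinate to $\pm 2$. This is the idea of Proposition~\ref{prop:SimulateClose} with $C=1$.
\end{itemize}
Either simulator never visits a cell outside $\operatorname{edom}(b)$. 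Its frameless and framed runs therefore coincide, and as an FNFA it accepts exactly the NFA language of $A$.
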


\bibliographystyle{plainnat}
\bibliography{references}
\end{document}